\newcommand{\wrt}{w.r.t.~}
\newcommand{\fig}[1]{Fig~\ref{#1}}
\newcommand{\lem}[1]{Lemma~\ref{#1}}
\newcommand{\set}[1]{\{{#1}\}}
\newcommand{\sect}[1]{Section~\ref{#1}}
\newcommand{\ab}{\allowbreak}
\newtheorem*{lemma*}{Lemma}
\newcommand{\tmachine}{Turing machine}
\newcommand{\tm}{\mathcal{M}}
\newcommand{\States}{Q}
\newcommand{\transitions}{\delta}
\newcommand{\istate}{q_0}
\newcommand{\fstate}{q_f}
\newcommand{\state}{q}
\newcommand{\cell}{c}
\newcommand{\lefttr}{\leftarrow}
\newcommand{\righttr}{\rightarrow}
\newcommand{\tminput}{t}
\newcommand{\tminputsize}{n}
\newcommand{\thelog}{{\polynomial(\tminputsize)}}
\newcommand{\ncells}{2^\thelog}
\newcommand{\Linearizability}{Linearizability}
\newcommand{\problemx}{Letter Insertion}
\newcommand{\letter}{a}
\newcommand{\nletter}{l}
\newcommand{\Letters}{A}
\newcommand{\insertable}{insertable}
\newcommand{\aut}{N}
\newcommand{\word}{w}
\newcommand{\perm}{p}
\newcommand{\minisep}{:}
\newcommand{\sepcell}{;}
\newcommand{\callconfig}{\$}
\newcommand{\retconfig}{\hookleftarrow}
\newcommand{\callrun}{\triangleright}
\newcommand{\retrun}{\square}
\newcommand{\theint}[1]{{\bf #1}}
\newcommand{\notwellformed}{\aut_{\sf notWF}}
\newcommand{\seqconf}{sequence of configurations}
\newcommand{\wellformedcfg}{well-formed configuration}
\newcommand{\wellformedseq}{well-formed sequence of configurations}
\newcommand{\notrun}{\aut_{\sf NotSeqCfg}}
\newcommand{\notdelta}{\aut_{\sf NotDelta}}
\newcommand{\notdeltaexample}[1]{\aut_{#1}} 
\newcommand{\notdeltaex}{\notdeltaexample{((\state,0),(\state',1,\righttr))}^{(1)}}
\newcommand{\pletter}{p}
\newcommand{\mletter}{m}
\newcommand{\halfCell}[1]{{#1}\minisep}
\newcommand{\encodeCell}[2]{\halfCell{#1}{#2}\sepcell}
\newcommand{\getelem}[2]{{#1}[{#2}]}
\newcommand{\largeAlphabet}{\Gamma}
\newcommand{\inputword}{t}
\newcommand{\polynomial}{P}
\newcommand{\mostgc}[2]{{#1}^{#2}}
\newcommand{\traces}[2]{Traces(\mostgc{#1}{#2})}
\newcommand{\libcfg}{\mu}
\newcommand{\uncalling}{\bot}
\newcommand{\modifyfun}[3]{{#1}[{#2} \leftarrow {#3}]}
\newcommand{\callsymb}{{\tt call}}
\newcommand{\retsymb}{{\tt ret}}
\newcommand{\callevent}[2]{\callsymb({#1},\meth_{#2})}
\newcommand{\retevent}[1]{\retsymb({#1})}
\newcommand{\fullcfg}{\gamma}
\newcommand{\sharedvalue}{d}
\newcommand{\tr}{h}
\newcommand{\mev}{e}
\newcommand{\cfg}{\sf cfg}
\newcommand{\nmethod}{m}
\newcommand{\meth}{M}
\newcommand{\method}{method}
\newcommand{\mstate}{q}
\newcommand{\mStates}{Q}
\newcommand{\mtransitions}{\delta}
\newcommand{\mistate}{\mstate_0}
\newcommand{\mfstate}{\mstate_f}
\newcommand{\domain}{\mathbb{D}}
\newcommand{\spec}{S}
\newcommand{\libr}{Lib}
\newcommand{\stopmeth}{\meth_{\sf Tick}}
\newcommand{\stopval}{{\sf Begin}}
\newcommand{\goval}{{\sf Run}}
\newcommand{\lastval}{{\sf End}}
\newcommand{\readcmd}{{\sf read}}
\newcommand{\writecmd}{{\sf write}}
\newcommand{\ithread}{i}
\newcommand{\imethod}{j}
\newcommand{\nthreads}{k}
\newcommand{\execlength}{l}
\newcommand{\wordsize}{m}
\title{On the complexity of Linearizability}
\begin{document}

\author{Jad Hamza}
\institute{LIAFA, Université Paris Diderot}

\maketitle

\begin{abstract}
It was shown in \citet{journals/iandc/AlurMP00} that the problem of verifying 
finite concurrent systems through \Linearizability{} is in $\EXPSPACE$. However, 
there was still a complexity gap between the easy to obtain $\PSPACE$ lower 
bound and the $\EXPSPACE$ upper bound. We show in this paper that 
\Linearizability{} is $\EXPSPACE$-complete.
\end{abstract}

\section{Introduction}

Linearizability~\citep{journals/toplas/HerlihyW90} is the standard consistency 
criterion for concurrent data-structures. 
\citet{journals/tcs/FilipovicORY10}
proved that checking that a library $L$ is linearizable
with respect to a specification $S$ is equivalent to observational refinement.
Formally, as long as linearizability holds, any multi-threaded program using
the specification $S$ as a library can safely replace it by $L$, without adding 
any unwanted behaviors.

Many practical tools
\citep{conf/pldi/BurckhardtDMT10,conf/cav/Vafeiadis10,conf/tacas/ElmasQSST10,conf/spin/VechevYY09,BEEH15} 
for checking linearizability or detecting 
linearizability violations exist, and here is short summary of the work
done on the complexity.

Checking that a single execution is linearizable is already an $\NP$-complete
problem\citep{journals/siamcomp/GibbonsK97}. Moreover,
\citet{journals/iandc/AlurMP00} showed that the problem of checking 
Linearizability for a finite concurrent libraries used by a finite number
of threads is in $\EXPSPACE$ when the specification is a regular language. 
The best known lower bound is 
$\PSPACE$-hardness, obtained from a simple reduction of the reachability 
problem for finite concurrent programs~\citep{journals/iandc/AlurMP00}, 
leaving a large complexity gap.

This result was refined in~\citet{conf/esop/BouajjaniEEH13} where it 
was shown that a simpler variant of Linearizability -- called Static 
Linearizability, or Linearizability with fixed linearization points -- is 
$\PSPACE$-complete for the same class of libraries.

Furthermore, \Linearizability{} is undecidable when the number of threads is 
unbounded~\citep{conf/esop/BouajjaniEEH13}. Tools used for detecting 
linearizability violations often start by underapproximating the set of 
executions by bounding the number of threads. It is thus necessary to 
develop a better understanding of \Linearizability{} for a bounded number of 
threads.

We prove that \Linearizability{} is $\EXPSPACE$-complete, showing that there is
an inherent difficulty to the problem. 
We introduce for this a new problem on regular languages, called 
\problemx{}. This problem can be reduced in polynomial time to Linearizability.

We then show that \problemx{} is $\EXPSPACE$-hard, closing the complexity
gap for Linearizability. Our proof is similar to the proofs of $\EXPSPACE$-
hardness for the problems of inclusion of extended regular expressions with 
intersection operator, or  interleaving operator, given in \citet{H73}, 
\citet{F80} and \citet{MS94}. They all use a similar encoding of runs of Turing 
machines as words, and using the problem at hand, \problemx{} in this case,
to recognize erroneous runs.

To summarize, our two contributions are:
\begin{itemize}
\item 
  finding the \problemx{} problem, a problem equivalent to Linearizability,
  but which has a very simple formulation in terms of regular automata,
\item
  using this problem to show $\EXPSPACE$-hardess of \Linearizability{}.
\end{itemize}

We recall in \sect{sec:definitions} the definition of \Linearizability, and we
introduce the \problemx{} problem. We show in 
\sect{sec:linearizability} that \problemx{} can be reduced in polynomial time
to \Linearizability{}. And finally, we show in \sect{sec:problemx}, that 
\problemx{} is $\EXPSPACE$-hard, which is the most technical part of the paper.
When combined, Sections~\ref{sec:linearizability} and \ref{sec:problemx} show
that \Linearizability{} is $\EXPSPACE$-hard.
\section{Definitions}
\label{sec:definitions}

\subsection{Libraries}

In the usual sense, a \emph{library} is a collection of \emph{\method{s}} that 
can be called by other programs. We start by giving our formalism for methods,
and define libraries as sets of \method{s}.

In order to simplify the presentation, and since they do not affect our 
$\EXPSPACE$-hardness reduction, we will use a number of restrictions on 
the \method{s}. First, we will define the \method{s} without 
return values and parameters. Second, each instruction of a \method{} can 
either read or write to the shared memory, but we don't formalize atomic 
compare and set operations. Finally, we limit ourselves to a unique 
\emph{shared variable}.

Let $\domain$ be a finite set used as the \emph{domain} for the  
shared variable and let $\sharedvalue_0 \in \domain$ be a special value
considered as initial.

A \emph{\method} is a tuple $(\mStates,\mtransitions,\mistate,\mfstate)$ where
\begin{itemize}
\item $\mStates$ is the set of states,
\item 
  $\mtransitions \subseteq 
    \mStates \times \set{\readcmd,\writecmd} \times \domain \times \mStates$
\item 
  $\mistate \in \mStates$ is the initial state (in which the method is called)
\item 
  $\mfstate \in \mStates$ is the final state (in which the method can return)
\end{itemize}

One point which might be considered unsual in our formalism is that a 
$\readcmd$ instruction \emph{guesses} the value that it is going to read.
In usual programming languages, this can be understood as first reading a 
variable, and then having an {\sf assume} statement to constrain the value 
of the read variable. This formalism choice is a presentation choice, and has 
no effect on the complexity of the problem.

As hinted previously, a \emph{library} 
$\libr = \set{\meth_1,\dots,\meth_\nmethod}$ is a set of 
methods. For every $\imethod \in \set{1,\dots,\nmethod}$, let 
$(\mStates^\imethod,\mtransitions^\imethod,
\mistate^\imethod,\mfstate^\imethod)$ be the tuple corresponding
to $\meth_\imethod$. We define $\mStates$ to be the (disjoint) union of all 
$\mStates^\imethod$.

Let $\nthreads$ be an integer representing the number of \emph{threads}
using $\libr$. 
Threads run concurrently and call the methods of $\libr$ 
arbitrarily. The system composed of $\nthreads$ threads calling arbitrarily 
the methods of $\libr$ is called $\mostgc{\libr}{\nthreads}$.

Formally, a \emph{configuration} of $\mostgc{\libr}{\nthreads}$ is a pair
$\fullcfg = (\sharedvalue,\libcfg)$ where $\sharedvalue \in \domain$ is the 
current value of the shared variable and $\libcfg$ is a map from 
$\set{1,\dots,\nthreads}$ to $\mStates \uplus \set{\uncalling}$,
specifying, for each thread $\ithread$, the state in which the method called by 
thread $\ithread$ is. 
The symbol $\uncalling$ is used for threads which are \emph{idle} 
(not calling any method at the moment).
 
A \emph{step} from a configuration $\fullcfg = (\sharedvalue,\libcfg)$ to
$\fullcfg' = (\sharedvalue',\libcfg')$ can be:
\begin{itemize}
\item thread $\ithread$ calling method $\imethod$,
denoted by $\fullcfg \xrightarrow{\callevent{\ithread}{\imethod}} \fullcfg'$,
with 
$\libcfg(\ithread) = \uncalling$,
$\libcfg' = \modifyfun{\libcfg}{\ithread}{\mistate^\imethod}$, and
$\sharedvalue' = \sharedvalue$,
\item thread $\ithread$ returning from method $\imethod$,
denoted by $\fullcfg \xrightarrow{\retevent{\ithread}} \fullcfg'$,
with
$\libcfg(\ithread) = \mfstate^\imethod$,
$\libcfg' = \modifyfun{\libcfg}{\ithread}{\uncalling}$, and
$\sharedvalue' = \sharedvalue$,
\item thread $\ithread$ doing a read in method $\imethod$, 
denoted by $\fullcfg \xrightarrow{} \fullcfg'$ (no label)
with
$\libcfg(\ithread) = \mstate \in \meth_\imethod$,
$\libcfg' = \modifyfun{\libcfg}{\ithread}{\mstate'}$,
$(\mstate,\readcmd,\sharedvalue,\mstate') \in \mtransitions^\imethod$,
and $\sharedvalue' = \sharedvalue$,
\item thread $\ithread$ doing a write in method $\imethod$, 
denoted by $\fullcfg \xrightarrow{} \fullcfg'$ (no label)
with
$\libcfg(\ithread) = \mstate \in \meth_\imethod$,
$\libcfg' = \modifyfun{\libcfg}{\ithread}{\mstate'}$,
$(\mstate,\writecmd,\sharedvalue',\mstate') \in \mtransitions^\imethod$.
\end{itemize}

An \emph{execution} of $\mostgc{\libr}{\nthreads}$ is a sequence of 
steps 
$\fullcfg_0 \xrightarrow{} \fullcfg_1 \dots \xrightarrow{} 
\fullcfg_\execlength$
where $\fullcfg_0 = (\sharedvalue_0,\libcfg_0)$, 
with $\libcfg_0(\ithread) = \uncalling$ for all $\ithread$,
is the initial configuration.

The \emph{trace} $\tr$ of an execution is the sequence of labels 
(call's and return's) of its steps.
The set of traces of $\mostgc{\libr}{\nthreads}$ is denoted by 
$\traces{\libr}{\nthreads}$. Note that in a trace, a call event may be without
a corresponding return event (if the method has not returned yet). In which 
case, the call event is said to be \emph{open}.
A trace with no open calls in called \emph{complete}.

Given a complete trace $\tr$, we define for each pair of matching call and 
return events a \emph{method event}. We say that a method event $\mev_1$ 
\emph{happens before} another method event $\mev_2$ if the return event of
$\mev_1$ is before the call event of $\mev_2$ in $\tr$; this defines a 
\emph{happen-before} relation on the method events. The \emph{label} of a 
method event is the method name corresponding to its call event.

\subsection{Linearizability}

Let $\tr$ be a trace of $\traces{\libr}{\nthreads}$ for some library 
$\libr$ and integer $\nthreads$. A complete trace $\tr'$ is said to 
be a \emph{completion} of $\tr$ if we can remove some (possibly zero) open 
calls from $\tr$, as well as close some others open calls (possibly zero) by
adding return events at the end of $\tr$ in order to obtain $\tr'$.

A \emph{specification} for a library 
$\libr = \set{\meth_1,\dots,\meth_\nmethod}$
is a language of finite words $\spec$ over the alphabet 
$\set{\meth_1,\dots,\meth_\nmethod}$.

\begin{definition}[Linearizability]
A complete trace $\tr$ is said to be \emph{linearizable} with respect to
a specification $\spec$ if there exists a total order on the method events, 
respecting the happen-before order, such that the corresponding sequence of 
labels is a word in $\spec$. A trace $\tr$ is said to be \emph{linearizable} 
with respect to $\spec$ if it has a completion which is linearizable 
(with respect to $\spec$).
\end{definition}

\begin{problem}[\Linearizability]
Input: A library $\libr = \set{\meth_1,\dots,\meth_\nmethod}$, a 
non-deterministic finite automaton (NFA) $\spec$ representing the 
specification, and an integer $\nthreads$ given in unary.

Question: Are all the traces of $\traces{\libr}{\nthreads}$ linearizable \wrt 
$\spec$?
\end{problem}

Note: the size of the input is the size of all the automata appearing in the
input (number of states + number of transitions + size of the alphabet) to 
which we add $\nthreads$.

We give in Figs~\ref{fig:ex1}, \ref{fig:ex2}, and \ref{fig:ex3} some examples
to illustrate \Linearizability. To represente executions, we draw a 
method event as an interval, where the left end of the interval corresponds to 
the call event of the method event, and the right end corresponds to the return 
event. This way, when two method events overlap,
they can be ordered arbitrarily, but when a method event $e_1$ is completely 
before a method event $e_2$, $e_1$ has to be ordered before $e_2$.

Above an interval, we write the name of 
the method corresponding to the method event, and below, we write the (unique)
name of the method event.

For all the examples, we consider the regular 
language $\spec = (\meth_A \meth_B)^*$ as a specification.
\fig{fig:ex1} represents an execution which is linearizable, since its method
events can be ordered as the sequence $e_1 e_2 e_3 e_4$, whose corresponding 
sequence of labels is $\meth_A \meth_B \meth_A \meth_B$.
\fig{fig:ex2} represents an execution which is linearizable, since its method
events can be ordered as the sequence $e_1 e_2 e_3 e_4 e_5 e_6$, whose 
corresponding sequence of labels is 
$\meth_A \meth_B \meth_A \meth_B \meth_A \meth_B$.
\fig{fig:ex3} represents an execution which is similar to
\fig{fig:ex1} but is not linearizable.
 
\newcommand{\makeinterval}[6]{
  \draw[|-|] (#2,#1) -- node (#6) {} (#3,#1) ;
  \node[above] at (#6) {#4}; 
  \node[below] at (#6) {#5}; 
}

\begin{figure}
\begin{center}
\begin{tikzpicture}[xscale=0.7]
\makeinterval{0}{0}{4}{$\meth_A$}{$e_1$}{a}
\makeinterval{0}{5}{9}{$\meth_B$}{$e_2$}{b}
\makeinterval{-1}{2}{6}{$\meth_A$}{$e_3$}{c}
\makeinterval{-1}{7}{11}{$\meth_B$}{$e_4$}{d}
\end{tikzpicture}
\caption{A linearizable execution, which can be ordered as $e_1 e_2 e_3 e_4$}
\label{fig:ex1}
\end{center}
\end{figure}

\begin{figure}
\begin{center}
\begin{tikzpicture}[xscale=0.5]
\makeinterval{0}{0}{4}{$\meth_A$}{$e_1$}{a}
\makeinterval{0}{5}{9}{$\meth_B$}{$e_2$}{b}
\makeinterval{-1}{2}{6}{$\meth_A$}{$e_3$}{c}
\makeinterval{-1}{7}{11}{$\meth_B$}{$e_4$}{d}
\makeinterval{-2}{3}{8}{$\meth_A$}{$e_5$}{e}
\makeinterval{-2}{10}{13}{$\meth_B$}{$e_6$}{f}
\end{tikzpicture}
\caption{A linearizable execution, which can be ordered as $e_1 e_2 e_3 e_4 e_5 e_6$}
\label{fig:ex2}
\end{center}
\end{figure}

\begin{figure}
\begin{center}
\begin{tikzpicture}[xscale=0.7]
\makeinterval{0}{0}{4}{$\meth_A$}{$e_1$}{a}
\makeinterval{0}{7}{11}{$\meth_B$}{$e_2$}{b}
\makeinterval{-1}{2}{6}{$\meth_A$}{$e_3$}{c}
\makeinterval{-1}{9}{13}{$\meth_B$}{$e_4$}{d}
\end{tikzpicture}
\caption{A non-linearizable execution}
\label{fig:ex3}
\end{center}
\end{figure}

\subsection{\problemx}

We were able to define a new problem, \emph{\problemx}, which: 
1) can be reduced to Linearizability, 
2) is very easy to state (compared to Linearizability), 
3) is still complex enough to capture the difficult part of Linearizability
as we'll show it is $\EXPSPACE$-hard.

\begin{problem}[\problemx]
\label{pb:problemx}
Input: A set of \emph{\insertable} letters 
$\Letters = \set{\letter_1,\dots,\letter_\nletter}$. An NFA $\aut$ over an 
alphabet $\largeAlphabet \uplus \Letters$.

Question: For all words $\word \in \largeAlphabet^*$, does there exist a 
decomposition
$\word = \word_0\cdots\word_\nletter$, and a permutation $\perm$ of 
$\set{1,\dots,\nletter}$, such that 
$\word_0\letter_{\getelem{\perm}{1}}
\word_1\dots\letter_{\getelem{\perm}{\nletter}}\word_\nletter$
is accepted by $\aut$?
\end{problem}

Said differently, for any word of $\largeAlphabet^*$, can we insert the 
letters $\set{\letter_1,\dots,\letter_\nletter}$ 
(each of them exactly once, in any order, anywhere in the word) to obtain a 
word accepted by $\aut$?

Note: the size of the input is the size of N, to which we add $\nletter$.

\section{Reduction from \problemx{} to \Linearizability{}}
\label{sec:linearizability}

In this section, we show that \problemx{} can be reduced in polynomial time 
to \Linearizability{}. When we later show that \problemx{} is $\EXPSPACE$-hard, 
we will get that \Linearizability{} is $\EXPSPACE$-hard as well. 

Intuitively, the letters
$\Letters = \set{\letter_1,\dots,\letter_\nletter}$ of \problemx{} represent 
methods which are all overlapping with every other method, and the word 
$\word$ represents methods which are in sequence. 
\problemx{} asks whether we can insert the letters in $\word$ in order to 
obtain a sequence of $\aut$ while linearizability asks whether there is a
way to order all the letters, while preserving the order of $\word$, 
to obtain a sequence of $\aut$, which is equivalent.

\begin{lemma}
\label{lem:reduction}
\problemx{} can be reduced in polynomial time to \Linearizability{}.
\end{lemma}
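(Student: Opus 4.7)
The plan is to build a library $\libr$ over one shared variable with one \method{} $\meth_a$ for each letter $a \in \largeAlphabet \uplus \Letters$ (plus one auxiliary \method{} used to signal completion), to set $\nthreads = \nletter + 1$, and to take the specification $\spec$ to be essentially $\aut$. All the pieces of this construction have size polynomial in the input of \problemx{}, so only correctness is at stake. The intended schedule is the one suggested by the paragraph preceding the lemma: one \emph{sequence} thread sequentially calls $\largeAlphabet$-\method{}s, producing some $\word \in \largeAlphabet^*$, while each of the remaining $\nletter$ threads calls one of the \method{}s labelled $\letter_1, \dots, \letter_\nletter$ so that all $\nletter$ of these method events overlap each other and the entire $\largeAlphabet$-sequence. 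The happen-before relation then permits exactly the linearizations of the form $\word_0 \letter_{\getelem{\perm}{1}} \word_1 \cdots \letter_{\getelem{\perm}{\nletter}} \word_\nletter$, i.e.\ arbitrary insertions of the $\letter_i$ into $\word$.

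To enforce this structure with a single shared variable, I would use the domain $\set{0, 1, \dots, \nletter+1}$ as a counter. \method{} $\meth_{\letter_i}$ starts by reading $i-1$ and writing $i$ (so the $\letter_i$-\method{}s are forced to start in the order $\letter_1, \dots, \letter_\nletter$), and then blocks waiting for the value $\nletter+1$ before it can return; the auxiliary \method{} is the unique writer of $\nletter+1$. The \method{}s $\meth_a$ with $a \in \largeAlphabet$ are enabled only when the counter equals $\nletter$, i.e.\ between the last $\letter_i$-start and the auxiliary write. Once all $\nletter$ of the $\letter_i$-\method{}s are pending, only one thread is free, so the $\largeAlphabet$-\method{}s in any complete execution are performed sequentially by that thread, giving a sequential word $\word$, while every $\letter_i$-\method{} is still active and thus overlaps every $\largeAlphabet$- and $\letter_j$-\method{}.

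For correctness I would argue both directions via the choice of completion. In the direction \problemx{} $\Rightarrow$ \Linearizability{}, I take an arbitrary trace of $\mostgc{\libr}{\nthreads}$ and choose the completion that \emph{removes} every open call; the remaining completed events always realize the intended overlap pattern, so every insertion of $\letter_1, \dots, \letter_\nletter$ into the $\largeAlphabet$-word $\word$ appearing in the trace is a valid linearization, and the \problemx{} hypothesis applied to this $\word$ supplies one in $L(\aut)$. Conversely, for each $\word \in \largeAlphabet^*$ I exhibit an explicit schedule whose completed trace has the intended shape, so that linearizability of this trace yields an accepting insertion of $\word$. The main obstacle is handling \emph{malformed} traces — two threads calling the same $\meth_{\letter_i}$, some $\meth_{\letter_i}$ never called, some $\largeAlphabet$-\method{} invoked with the counter in the wrong state — because then the completed trace no longer has exactly one of each $\letter_i$. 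To absorb these uniformly I would take $\spec$ to be $L(\aut) \cup \set{u : \exists i,\, |u|_{\letter_i} \neq 1}$, with the auxiliary symbol treated as ignored; this is still representable by an NFA of polynomial size, and every linearization of a malformed trace then automatically falls in the bad-case part of $\spec$, so only the well-formed traces are tested against $\aut$, which is precisely what \problemx{} controls.
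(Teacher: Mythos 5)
Your overall strategy is the same as the paper's: the letters of $\Letters$ become methods that pairwise overlap and overlap the whole $\largeAlphabet$-sequence, the word $\word$ is produced sequentially by a single free thread, and the specification is $\aut$ augmented with a ``bad multiplicity'' part that absorbs traces not having exactly one event per $\letter_i$. The gadget itself differs (a counter over a domain of size $\nletter+2$, written by the $\letter_i$-methods themselves, versus the paper's three-valued variable written only by a dedicated method $\stopmeth$), and that difference is where your proof has a genuine gap. The critical step --- and the only delicate part of this lemma --- is your claim that in an \emph{arbitrary} trace, after you ``remove every open call,'' the surviving completed events ``always realize the intended overlap pattern.'' For your gadget this is false. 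Because each $\meth_{\letter_i}$ reads $i-1$ and writes $i$ in two separate steps, an \emph{open} duplicate instance of some $\meth_{\letter_i}$ --- which becomes invisible to the specification once you remove it --- can perform its read early and its write of $i$ arbitrarily late, for instance after the auxiliary method has already written $\nletter+1$. The counter is then not monotone: it can drop back to $\nletter$, enabling a fresh $\largeAlphabet$-method to be called and completed \emph{after} every completed $\letter_j$-method has already returned. The resulting completed trace has exactly one of each $\letter_j$ (so it escapes your bad-case clause) yet carries a happen-before edge from some $\letter_j$-event to some $\largeAlphabet$-event, so its linearizations form a strict subset of the insertions of $\Letters$ into $\word$; the insertion supplied by the \problemx{} hypothesis need not lie in that subset, and your forward direction collapses.

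The paper's construction is engineered precisely to exclude this scenario: the methods $\meth_1,\dots,\meth_\nletter$ are read-only ($\readcmd(\stopval)$ then $\readcmd(\lastval)$), the unique writer is $\stopmeth$, and the specification additionally accepts every word whose number of $\stopmeth$'s differs from one. This forces any potentially non-linearizable trace to contain exactly one $\stopmeth$, hence a monotone value sequence $\stopval^+\goval^+\lastval^+$, from which the full overlap pattern follows for all completed events with no interference from stray open calls. Your argument could likely be repaired --- either by choosing the completion more carefully (close, rather than drop, a spurious open $\letter_i$-call, so that the completed word contains two $\letter_i$'s and lands in the bad part of $\spec$, after which the remaining case has a single instance of each writer and the counter really is monotone) or by making the $\letter_i$-methods read-only as in the paper --- but as written the key claim is asserted rather than proved, and it is exactly the claim on which the reduction hinges.
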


\begin{proof}
Let $\Letters = \set{\letter_1,\dots,\letter_\nletter}$ and
$\aut$ an NFA over some alphabet $\Letters \uplus \largeAlphabet$. 

Define $\nthreads$, the number of threads, to be $\nletter + 2$.

We will define a library $\libr$ composed of
\begin{itemize} 
\item methods $\meth_1,\dots,\meth_\nletter$,
one for each letter of $\Letters$
\item methods $\meth_\gamma$, one for 
each letter of $\largeAlphabet$
\item a method $\stopmeth$.
\end{itemize}
and a specification $\spec_\aut$, such that
$(\Letters,\aut)$ is a valid instance of \problemx{} if and only if
$\mostgc{\libr}{\nthreads}$ is linearizable with respect to $\spec_\aut$. 

For the domain of the shared variable, we only need three values: 
$\domain = \set{\stopval,\goval,\lastval}$ with $\stopval$ being the initial
value.

The methods $\meth_\gamma$ are all identical. They just read the value $\goval$ 
from the shared variable (see \fig{fig:methgamma}).

\begin{figure}[htb]
\begin{center}
\begin{tikzpicture}[node distance=2.7cm,->]
\node[state] (A) {$q_0$};
\node[state] (B) [right of=A] {$q_1$};

\path (A) edge node[above] {$\readcmd (\goval)$}  (B);
\end{tikzpicture}
\caption{Description of $\meth_\gamma$, $\gamma \in \largeAlphabet$}
\label{fig:methgamma}
\end{center}
\end{figure}

The methods $\meth_1,\dots,\meth_\nletter$ all read $\stopval$, and then read 
$\lastval$ (see \fig{fig:methi}).

\begin{figure}[htb]
\begin{center}
\begin{tikzpicture}[node distance=2.8cm,->]
\node[state] (A) {$q_0$};
\node[state] (B) [right of=A] {$q_1$};
\node[state] (C) [right of=B] {$q_2$};

\path (A) edge node[above] {$\readcmd (\stopval)$}  (B);
\path (B) edge node[above] {$\readcmd (\lastval)$}  (C);
\end{tikzpicture}
\caption{Description of $\meth_1,\dots,\meth_\nletter$ }
\label{fig:methi}
\end{center}
\end{figure}

The method $\stopmeth$ writes $\goval$, and then $\lastval$ 
(see \fig{fig:stopmeth}).

\begin{figure}[htb]
\begin{center}
\begin{tikzpicture}[node distance=2.8cm,->]
\node[state] (A) {$q_0$};
\node[state] (B) [right of=A] {$q_1$};
\node[state] (C) [right of=B] {$q_2$};

\path (A) edge node[above] {$\writecmd (\goval)$}  (B);
\path (B) edge node[above] {$\writecmd (\lastval)$}  (C);
\end{tikzpicture}
\caption{Description of $\stopmeth$ }
\label{fig:stopmeth}
\end{center}
\end{figure}

The specification $\spec_\aut$ is defined as the set of words $\word$ over
the alphabet $\set{\meth_1,\dots,\meth_\nletter} \cup \set{\stopmeth}
\cup \set{\meth_\gamma | \gamma \in \Gamma}$
such that one the following condition holds:
\begin{itemize}
\item $\word$ contains 0 letter $\stopmeth$, or more than 1, or
\item for a letter $\meth_i$, $i \in \set{1,\dots,\nletter}$, $\word$
contains 0 such letter, or more than 1, or
\item when projecting over the letters $\meth_\gamma$, 
$\gamma \in \largeAlphabet$ and 
$\meth_i$, $i \in \set{1,\dots,\nletter}$, $\word$ is in $\aut_\meth$,
where $\aut_\meth$ is $\aut$ where each 
letter $\gamma$ is replaced by the letter $\meth_\gamma$, and where
each letter $\letter_i$ is replaced by the letter $\meth_i$.
\end{itemize}

Since $\aut$ is an NFA, $\spec_\aut$ is also an NFA. Moreover, its size is
polynomial is the size of $\aut$.
We can now show the following equivalence:
\begin{enumerate}
\item
  there exists a word $\word$ in $\largeAlphabet^*$, such that there is no 
  way to insert the letters from $\Letters$ in order to obtain a word accepted 
  by $\aut$ \label{stmt:noinsert}
\item 
  there exists an execution of $\libr$ with $\nthreads$ threads which is not 
  linearizable \wrt $\spec_\aut$
  \label{stmt:notlin}
\end{enumerate}

\newcommand{\notlinrun}{r}

$(\ref{stmt:noinsert}) \implies (\ref{stmt:notlin})$. Let $\word \in 
\largeAlphabet^*$ such that there is no way to insert the letters $\Letters$
in order to obtain a word accepted by $\aut$. We construct an execution of 
$\libr$ following \fig{fig:run}, which is indeed a valid execution.

\begin{figure}
\newcommand{\setpoint}[4]{
  \node[circle,fill,inner sep=0pt,minimum size=4pt] at 
    ($(#1) + {#3}*(#2)-{#3}*(#1)$) {};
  \node[above] at ($(#1) + {#3}*(#2)-{#3}*(#1)$) {#4};
}

\begin{tikzpicture}[xscale=2,yscale=0.8]

\node at (0.5,0) (m1call) {};
\node at (5,0) (m1ret) {};
\draw[|-|] (m1call) -- (m1ret);
\node at (-0.5,0) { $\meth_1$ };
\setpoint{m1call}{m1ret}{0.1}{$\readcmd(\stopval)$}
\setpoint{m1call}{m1ret}{0.86}{$\readcmd(\lastval)$}

\node at (0,-1) (m2call) {};
\node at (5.3,-1) (m2ret) {};
\draw[|-|] (m2call) -- (m2ret);
\node at (-0.5,-1) { $\meth_2$ };
\setpoint{m2call}{m2ret}{0.2}{$\readcmd(\stopval)$}
\setpoint{m2call}{m2ret}{0.9}{$\readcmd(\lastval)$}

\node at (3,-2) {$\vdots$};

\node at (0,-3) (mlcall) {};
\node at (5.1,-3) (mlret) {};
\draw[|-|] (mlcall) -- (mlret);
\node at (-0.5,-3) { $\meth_\nletter$ };
\setpoint{mlcall}{mlret}{0.2}{$\readcmd(\stopval)$}
\setpoint{mlcall}{mlret}{0.9}{$\readcmd(\lastval)$}

\node at (0.3,-4) (mtickcall) {};
\node at (5.2,-4) (mtickret) {};
\draw[|-|] (mtickcall) -- (mtickret);
\node at (-0.5,-4) { $\stopmeth$ };
\setpoint{mtickcall}{mtickret}{0.2}{$\writecmd(\goval)$}
\setpoint{mtickcall}{mtickret}{0.78}{$\writecmd(\lastval)$}

\node at (0.7,-5) (mgamma1call) {};
\node at (1.8,-5) (mgamma1ret) {};
\draw[|-|] (mgamma1call) -- node[below] {$\meth_{\gamma_1}$} (mgamma1ret);
\setpoint{mgamma1call}{mgamma1ret}{0.8}{$\readcmd(\goval)$}

\node at (3,-5) {$\cdots$};

\node at (3.5,-5) (mgammakcall) {};
\node at (5,-5) (mgammakret) {};
\draw[|-|] (mgammakcall) -- node[below] 
  {$\meth_{\gamma_\wordsize}$} (mgammakret);
\setpoint{mgammakcall}{mgammakret}{0.2}{$\readcmd(\goval)$};

\end{tikzpicture}
\caption{
  Non-linearizable execution corresponding to a word $\gamma_1\dots\gamma_\wordsize$
  in which we cannot insert the letters from 
  $\Letters = \set{\letter_1,\dots,\letter_\nletter}$ to make it accepted by 
  $\aut$. The points represent steps in the automata.
}
\label{fig:run}
\end{figure}

This execution is not linearizable since
\begin{itemize}
\item it has exactly one $\stopmeth$ method, and
\item for each $i \in \set{1,\dots,\nletter}$, it has exactly one $\meth_i$ 
method, and
\item no linearization of this execution can be in $\aut_\meth$, since
there is no way to insert the letters $\Letters$ into $\word$ to be accepted
by $\aut$.
\end{itemize}

Note: The value of the shared variable is initialized to $\stopval$,
allowing the methods $\meth_i$ ($i \in \set{1,\dots,\nletter})$ to make their 
first transition. $\stopmeth$ then sets the value to $\goval$, thus allowing
the methods $\meth_\gamma, \gamma \in \largeAlphabet$ to execute. Finally,
$\stopmeth$ sets the value to $\lastval$, allowing the methods 
$\meth_\gamma, \gamma \in \largeAlphabet$ to make their second transition and 
return. This tight interaction will enable us to show in the second part of
the proof that all non-linearizable executions of this library have this very
particular form.

$(\ref{stmt:notlin}) \implies (\ref{stmt:noinsert})$. Let 
$\notlinrun$ be an execution which is not linearizable \wrt $\spec_\aut$. We 
first show that this execution should roughly be of the form shown in 
\fig{fig:run}. First, since it is not linearizable \wrt $\spec_\aut$,
it must have at least one completed $\stopmeth$ method event. It it only had
open $\stopmeth$ events (or no $\stopmeth$ events at all), it could be 
linearized by dropping all the open calls to $\stopmeth$. Moreover, it cannot
have more than $\stopmeth$ method event (completed or open), as it could also
linearized, since $\spec_\aut$ accepts all words with more than one 
$\stopmeth$ letter.

We can show similarly that 
for each $i \in \set{1,\dots,\nletter}$, it has exactly one $\meth_i$ 
method which is completed (and none open).

Moreover, the methods $\meth_i$ ($i \in \set{1,\dots,\nletter}$
can only start when the value of the shared variable is $\stopval$, and 
they can only return after reading the value $\lastval$. Since this 
value can only be changed (once) by the single $\stopmeth$ method of our 
executions, 
this ensures that the methods
$\meth_i$ ($i \in \set{1,\dots,\nletter}$) (and $\stopmeth$ itself) all 
overlap with one another, and with every other completed method.

This implies that the completed methods 
$\meth_\gamma, \gamma \in \largeAlphabet$
can only appear in a single thread $t$ 
(since $\meth_1,\dots,\meth_\nletter,\stopmeth$ already occupy $\nletter+1$ 
threads amongst the $\nletter+2$ available).
Thus, we define $\word \in \largeAlphabet^*$ to be the word corresponding to the
completed methods $\meth_\gamma$, $\gamma \in \largeAlphabet$ of the 
execution in the order in which they appear in thread $t$.

Since $\notlinrun$ is not linearizable, we cannot insert 
$\meth_i$ ($i \in \set{1,\dots,\nletter}$) into the completed methods of 
thread $t$ order to be accepted by $\spec_\aut$.
In particular, this implies that there is no way to insert the letters 
$\Letters$ in $\word$ in order to be accepted by $\aut$.

\end{proof}

\section{\problemx{} is $\EXPSPACE$-hard}
\label{sec:problemx}

We now reduce, in polynomial time, arbitrary exponentially bounded \tmachine{s}, 
to the \problemx{} problem, which shows it is $\EXPSPACE$-hard. We first give a 
few notations.

A \emph{deterministic \tmachine{}} $\tm$ is a tuple 
$(\States,\transitions,\istate,\fstate)$ where:
\begin{itemize}
\item $\States$ is the set of states,
\item 
  $\transitions: (\States \times \set{0,1}) \rightarrow 
   (\States \times \set{0,1} \times \set{\lefttr,\righttr})$
   is the transition function
\item $\istate,\fstate$ are the initial and final states, respectively.
\end{itemize}

A computation of $\tm$ is said to be \emph{accepting} if it ends in $\fstate$.

For the rest of the paper, we fix a \tmachine{} $\tm$ and a polynomial
$\polynomial$ such that all runs of $\tm$ starting with an input of size 
$\tminputsize$ use at most $\ncells$ cells, and such that
the following problem is $\EXPSPACE$-complete.

\begin{problem}[Reachability]
\label{pb:reach}
Input: A finite word $\inputword$.

Question:
Is the computation of $\tm$ starting in state $\istate$, with the tape 
initialized with $\inputword$, accepting?
\end{problem}

\begin{lemma}[\problemx]
\label{lem:problemx}
\problemx{} is $\EXPSPACE$-hard.
\end{lemma}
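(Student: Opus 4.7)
The plan is to reduce Problem~\ref{pb:reach} (which is $\EXPSPACE$-complete) to the complement of \problemx{}. Given the TM $\tm$ and an input $\inputword$ of size $\tminputsize$, I construct in polynomial time an instance $(\Letters, \aut)$ such that $\tm$ accepts $\inputword$ iff there exists a word $w \in \largeAlphabet^*$ for which \emph{no} permutation and decomposition yields a word of $L(\aut)$. Since $\EXPSPACE$ is closed under complement, this suffices to show \problemx{} is $\EXPSPACE$-hard.

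To encode computations, I let $\largeAlphabet$ contain the tape alphabet of $\tm$, a copy of each state of $\tm$ (used to mark the head), the bit symbols $\{0,1\}$, and a separator $\#$. A candidate accepting run is written as $\# c_0 \# c_1 \# \cdots \# c_T \#$, where each $c_i$ lists $\ncells$ cell records, each record being the $p = \polynomial(\tminputsize)$-bit binary address of the cell followed by its content (with the state symbol replacing the content at the head position). The NFA $\aut$ is built so that, for every $w \in \largeAlphabet^*$, some insertion of $\Letters$ lies in $L(\aut)$ iff $w$ is \emph{not} a correct encoding of an accepting computation. The simpler forms of incorrectness---structural malformations, a wrong initial configuration, or the absence of $\fstate$---are local and can be caught by a polynomial-size NFA directly, without help from the insertable letters; those branches of $\aut$ will simply absorb any insertion in a ``dummy'' way.

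The main obstacle is detecting violations of the step-transition rule between two consecutive configurations $c_i$ and $c_{i+1}$, which have $\ncells = 2^p$ cells each. A violation is witnessed by some bad binary address $a^* \in \{0,1\}^p$, and $\aut$ must nondeterministically guess $a^*$ and verify it. My plan is to use the letters of $\Letters$ as markers: a constant number of letters single out the candidate bad cell in $c_i$ and in $c_{i+1}$, while $\Theta(p)$ additional bit-probe letters are forced by $\aut$ to line up with the bits of the two marked addresses, thereby certifying bit by bit that both markers point to the same cell $a^*$; content consistency is then a purely local check on $(c_i[a^*-1], c_i[a^*], c_i[a^*+1])$ against $c_{i+1}[a^*]$ under the transition function $\transitions$. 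What makes this feasible with a polynomial-size $\aut$ is that \problemx{} allows the letters to be inserted in any order anywhere in $w$, so $\aut$ never has to store the value of $a^*$ in its control state while scanning the exponentially many cells between the two cell markers---the bit-probes can be slotted in wherever the two address strings happen to present the right bit. Designing the gadgets so that (i) every invalid $w$ admits some insertion accepted by $\aut$ and (ii) the unique valid encoding of an accepting run admits \emph{no} accepted insertion is the delicate part, and it is here that the bulk of the technical work will reside.
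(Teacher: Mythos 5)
Your plan follows the paper's proof essentially step for step: reduce Reachability to the \emph{complement} of \problemx{}, encode a run as a sequence of configurations whose cells carry $\thelog$-bit binary addresses, catch structural/initial/final defects with ordinary polynomial-size NFAs whose branches absorb the \insertable{} letters harmlessly, and reserve $\Letters$ for the one check an NFA cannot do alone, namely relating a cell to the same-address cell in the next configuration. So the architecture is right, and you have correctly identified where the difficulty sits.

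The problem is that the part you defer as ``the delicate part where the bulk of the technical work will reside'' is not a technical detail --- it is the theorem. Your description of the gadget (``$\Theta(p)$ bit-probe letters are forced by $\aut$ to line up with the bits of the two marked addresses'') does not yet explain how a letter that is inserted \emph{exactly once} can constrain \emph{two} positions of $\word$ at the same time; with one probe letter per bit position, a letter can witness bit $b$ of the first address or of the second, but not both, so nothing forces the two addresses to agree. The paper resolves this with a crossed-polarity trick: it uses \emph{two} letters $\pletter_b,\mletter_b$ per bit position, and the branch $\notdelta$ demands the pattern $(\mletter_b 0+\pletter_b 1)$ at bit $b$ of the first address and $(\pletter_b 0+\mletter_b 1)$ at bit $b$ of the second, with exactly one $\callconfig$ in between. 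Both occurrences consume one letter from $\set{\pletter_b,\mletter_b}$, and since each is available only once the choice at the first address forces the choice at the second, which in turn forces the two bits to be equal; conversely any genuine violation admits the obvious insertion. Until you exhibit a gadget with this ``each letter used once, so the two pattern matches are coupled'' property, the reduction is not established, so you should treat this as the missing core of the argument rather than as residual bookkeeping.
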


Note: the sublemmas \ref{lem:notwellformed},\ref{lem:notwellformed2},\ref{lem:insertionerror},\ref{lem:notdelta},\ref{lem:combination}
 are all part of the proof of \lem{lem:problemx}.
\begin{proof}

We reduce in polynomial time the Reachability problem for $\EXPSPACE$
\tmachine{s} to the \emph{negation} of \problemx{}. This still shows that 
\problemx{} is $\EXPSPACE$-hard, as the $\EXPSPACE$ complexity class is closed
under complement.

Let $\tminput$ be a word of size $\tminputsize$.
Our goal is to define a set of letters $\Letters$ and 
an NFA $\aut$ over an alphabet $\largeAlphabet \uplus \Letters$, such
that the following two statements are equivalent:
\begin{itemize}
\item the run of $\tm$ starting in state $\istate$ with the tape initialized 
with $\tminput$ is accepting 
(which, by definition of $\tm$, uses at most $\ncells$ cells),
\item there exists a word $\word$ in $\largeAlphabet^*$, such that there is no 
way to insert (see Problem~\ref{pb:problemx}) the letters $\Letters$ in order 
to obtain a word accepted by $\aut$.
\end{itemize}

More specifically, we will encode runs of our \tmachine{} as words,
and the automaton $\aut$, with the additional set of \insertable{} letters 
$\Letters$, will be used in order to detect words which:
\begin{itemize}
\item don't represent \emph{\wellformedseq} (defined below),
\item 
  or represent a \seqconf{} where the initial configuration is not initialized 
  with $\tminput$ and state $\istate$, or where the final configuration 
  isn't in state $\fstate$,
\item 
  or contain an error in the computation, according to the transition rules of 
  $\tm$.
\end{itemize}

A \emph{configuration} of $\tm$ is an ordered sequence
$(\cell_0,\dots,(\state,\cell_i),\dots,\cell_{\ncells-1})$ representing 
that the content of the tape is $\cell_0,\dots,\cell_{\ncells-1} \in \set{0,1}$, 
the current control state is $\state \in \States$, and the head is on cell $i$.

We denote by $\theint{i}$ the binary representation of $0 \leq i < \ncells$ 
using $\thelog$ digits.
Given a configuration, we represent cell $i$ by:
``$\encodeCell{\theint{i}}{\cell_i}$'' if the head of $\tm$ is not on cell $i$, 
and by ``$\encodeCell{\theint{i}}{\state\cell_i}$'' if the head is on cell $i$
and the current state of $\tm$ is $\state$.
The configuration given above is represented by the word:
\[
  \callconfig
  \encodeCell{\theint{0}}{\cell_0}
  \encodeCell{\theint{1}}{\cell_1}\dots
  \encodeCell{\theint{i}}{\state\cell_i}\dots
  \encodeCell{\theint{\ncells-1}}{\cell_{\ncells-1}}
  \retconfig
\].

Words which are of this form for some 
$\cell_0,\dots,\cell_{\ncells-1} \in \set{0,1}$, $\state \in \States$,
are called \emph{\wellformedcfg{s}}.
A \seqconf{} is then encoded as 
$\callrun\cfg_1\dots \cfg_k \retrun$
where each $\cfg_i$ is a \wellformedcfg. A word of this form is called 
a \emph{\wellformedseq}. We now fix $\largeAlphabet$ to be
$\set{0,1,\callrun,\retrun,\callconfig,\retconfig,\sepcell,\minisep}$.

\begin{lemma}
\label{lem:notwellformed}
There exists an NFA $\notwellformed$ of size polynomial in $\tminputsize$, 
which recognizes words which are not \wellformedcfg{s}.
\end{lemma}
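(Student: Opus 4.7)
The plan is to build $\notwellformed$ as the nondeterministic union of several small NFAs, each responsible for catching one specific way in which a word over $\largeAlphabet$ can fail to be a well-formed configuration. Well-formedness is the conjunction of several local and arithmetic constraints, so its negation is exactly the disjunction of their negations, which maps naturally onto a union of NFAs whose total size is the sum of the component sizes.

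First I would handle the purely structural defects: the word does not begin with $\callconfig$ or end with $\retconfig$; some cell is not properly framed by its delimiters $\sepcell$ and $\minisep$; some cell index does not have exactly $\thelog$ digits; a symbol outside $\largeAlphabet$ appears; or the payload bit of a cell is not in $\set{0,1}$. Each of these can be detected by an NFA examining a window of size $O(\thelog)$, which is polynomial in $\tminputsize$ since $\thelog = O(\polynomial(\tminputsize))$.

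Next I would enforce the boundary constraints and the state-marker constraint: the first cell's index must equal $\theint{0}$, the last cell's index must equal $\theint{\ncells-1}$, and exactly one cell across the whole configuration must carry a state symbol. The first/last index checks are small NFAs that align themselves just after $\callconfig$ (respectively just before $\retconfig$) and compare the $\thelog$ bits against the fixed target string, character by character. The uniqueness of the state marker decomposes into \emph{no state symbol anywhere} (a straightforward NFA) and \emph{at least two state symbols} (an NFA that nondeterministically guesses two distinct cells that both carry a state symbol).

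The main obstacle, and the heart of the construction, is the successor check between the indices of two adjacent cells: for every pair of consecutive cells, the index of the second must be one more (in binary) than the index of the first. The negation asserts the existence of a pair of consecutive cells whose indices $a$ and $b$ satisfy $b \neq a+1$. Here I would use the standard successor-violation gadget: the NFA nondeterministically guesses which pair of consecutive cells is faulty, a bit position $j \in \set{0,\ldots,\thelog-1}$, and the type of violation at $j$ (a mismatch of a bit above $j$, a wrong value at $j$ itself, or a wrong value below $j$, plus the corner case where $a$ consists of all $1$s). It then uses an $O(\thelog)$-state counter to walk to position $j$ inside $a$, remembers a single bit of information across the intervening cell payload, and re-uses the same counter inside $b$ to verify the violation. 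The total state count is $O(\thelog)$ times a small constant for the guesses, hence polynomial in $\tminputsize$; taking the union of all sub-NFAs above yields the desired $\notwellformed$.
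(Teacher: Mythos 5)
Your proposal is correct and follows essentially the same route as the paper: decompose non-well-formedness into a union of polynomial-size NFAs for each violation type (shape, unique state symbol, first index $\theint{0}$, last index $\theint{\ncells-1}$, and the successor check on adjacent indices), with the successor violation caught by nondeterministically guessing a bit position and a carry/flip condition and verifying it with an $O(\thelog)$-state counter. The paper's gadget guesses the least significant failing bit together with the carry at that position, while you guess the flip point and the disagreement location, but these are interchangeable variants of the same standard construction from \citet{F80} and \citet{MS94}.
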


\begin{proof}
A word is not a \wellformedcfg{} if and only if one of the following holds:
\begin{itemize}
\item it is not of the form 
$\callconfig
(\encodeCell{(0+1)^\thelog}{(\States+\epsilon)(0+1)})^*
\retconfig$, or
\item it has no symbol from $\States$, or more than one, or
\item it doesn't start with $\callconfig\halfCell{\theint{0}}$, or
\item it doesn't end with 
$\encodeCell{\theint{\ncells-1}}{(\States+\epsilon)(0+1)}\retconfig$, or
\item it contains a pattern
$\encodeCell{\theint{i}}{(\States+\epsilon)(0+1)}\halfCell{\theint{j}}$ where 
$j \neq i+1$.
\end{itemize}

For all violations, we can make an NFA of size polynomial in $\tminputsize$
recognizing them, and then take their union. The most difficult one is 
the last, for which there are detailed constructions in \citet{F80} and 
\citet{MS94}.

We here give a sketch of the construction. Remember that \theint{i} and
\theint{j} are binary representation using $\thelog$ bits. We want an automaton
recognizing the fact that $j \neq i+1$. The automaton guesses the least
significant bit $b$ ($\thelog$ possible choices) which makes the equality
$i+1 = j$
fails, as well as the presence or not of a carry (for the addition $i+1$) at 
that position. 
We denote by $\theint{i}[b]$ the bit $b$ of $\theint{i}$ and likewise for 
$\theint{j}$. Then, the automaton checks that: 1) there is indeed a violation 
at that position (for instance: no carry, $\theint{i}[b] = 0$ and 
$\theint{j}[b] = 1$) and 2) there is carry if and only if all bits less 
significant that $b$ are set to $1$ is $\theint{i}$.
\end{proof}

\begin{lemma}
\label{lem:notwellformed2}
There exists an NFA $\notrun$ of size polynomial in $\tminputsize$, which 
recognizes words which:
\begin{itemize} 
\item are not a \wellformedseq, or where
\item the first configuration is not in state $\istate$, or
\item the first configuration is not initialized with $\tminput$, or
\item the last configuration is not in state $\fstate$.
\end{itemize}
\end{lemma}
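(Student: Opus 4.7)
The plan is to construct $\notrun$ as a union of four NFAs, each detecting one category of violation; since each summand will be polynomial in $\tminputsize$, the union is polynomial in $\tminputsize$ as well.

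First, I would build a constant-size automaton $A_1$ that checks the outer shape and accepts any word that does not match the pattern of a $\callrun$ followed by one or more blocks delimited by $\callconfig$ and $\retconfig$, terminated by $\retrun$, with no stray occurrences of these four delimiters elsewhere. Only a handful of states are needed to track the top-level structure.

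Second, for a word that contains a non-well-formed configuration, I would use the standard ``some factor lies in $L$'' construction to obtain $A_2$: read an arbitrary prefix, nondeterministically commit at some $\callconfig$ to start verifying, then run the automaton $\notwellformed$ of Lemma~\ref{lem:notwellformed} on the selected block up to the matching $\retconfig$, and then read an arbitrary suffix. Combined with $A_1$, this is sound: on a word passing the top-level shape check, every $\callconfig$ genuinely starts a configuration, so any accepting guess corresponds to a real ill-formed configuration. The size of $A_2$ is polynomial since $\notwellformed$ is.

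Third, the automaton $A_3$ targets the initial configuration. It reads $\callrun \callconfig$ and then hard-codes the expected encodings of cells $0, 1, \dots, \tminputsize - 1$, including the state $\istate$ attached to cell $0$ and the bits prescribed by $\tminput$; if at any of those positions the actual encoding deviates from what is expected, $A_3$ accepts. The delicate point is cells $\tminputsize, \dots, \ncells - 1$, which may be exponentially many: here the only condition to verify is that each cell carries bit $0$ with no state marker, a local condition that repeats until the matching $\retconfig$. A small gadget of $O(\thelog)$ states can re-synchronise at each cell delimiter and check the local payload, independently of the number of cells. The total size of $A_3$ is therefore polynomial in $\tminputsize$.

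Fourth, $A_4$ detects when the last configuration's state is not $\fstate$. It reads an arbitrary prefix, nondeterministically guesses at some $\callconfig$ that this $\callconfig$ begins the last configuration, scans forward within the block and records the state symbol it observes, and then checks that the block's closing $\retconfig$ is immediately followed by $\retrun$; it accepts iff the recorded state differs from $\fstate$. This uses $O(|\States|)$ states. Taking the union of $A_1, A_2, A_3, A_4$ yields $\notrun$. The main obstacle is the third step, keeping $A_3$ polynomial despite configurations having exponentially many cells; this is resolved by the observation that beyond cell $\tminputsize - 1$ the required condition is uniform and checkable by a single local gadget that does not count cells.
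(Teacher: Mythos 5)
Your construction is correct and follows the same route as the paper, which simply takes a nondeterministic union of $\notwellformed$ (suitably embedded to scan for an ill-formed configuration block) with small automata for the three remaining conditions. Your additional observation that the initial-configuration check stays polynomial because the blank cells beyond position $\tminputsize-1$ are handled by a uniform local gadget is a worthwhile detail that the paper's one-line proof leaves implicit.
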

\begin{proof}
Non-deterministic union between $\notwellformed$ and simple automata 
recognizing the last three conditions.
\end{proof}

The problem is now in making an NFA which detects violations in the 
computation with respect to the transition rules of $\tm$. Indeed, in our
encoding, the length of one configuration is about $\ncells$, and thus,
violations of the transition rules from one configuration to the next are going
to be separated by about $\ncells$ characters in the word. We conclude
that we cannot make directly an automaton of polynomial size which 
recognize such violations.

This is where we use the set of \insertable{} letters $\Letters$. We are
going to define and use it here, in order to detect words which encode a 
sequence of configurations where there is a computation error, according to 
the transition rules of $\tm$.

The set $\Letters$, containing $2\thelog$ new letters, is
defined as 
$\Letters = \ab\set{
  \pletter_1,\dots,\pletter_\thelog,\ab
  \mletter_1,\ab,\dots,\mletter_\thelog
}$.

We want to construct an NFA $\notdelta$, such that, for a word $\word$ which is
a \wellformedseq, these statements are equivalent:
\begin{itemize}
\item 
  $\word$ has a computation error according to the transition rules 
  $\transitions$ of $\tm$ 
\item 
  we can insert the letters 
  $\Letters$ in $\word$ to obtain a word accepted by $\notdelta$.
\end{itemize}

The idea is to use the letters $\Letters$ in order to identify two places in 
the word corresponding to the same cell of $\tm$, but at two successive 
configurations of the run.

As an example, say we want to detect a violation of the transition 
$\transitions(\state,0) = (\state',1,\righttr)$, that is, which 
reads a $0$, writes a $1$, moves the head to the right, and changes the state 
from $\state$ to $\state'$.
  
Assume that $\word$ contains a sub-word of the following form:
\begin{align*}
\encodeCell{\theint{i}}{\state0}
\dots
\callconfig
\dots
\encodeCell{\theint{i}}{1}
\encodeCell{\theint{i+1}}{\state''\cell_{i+1}} 
\end{align*}
where $\state''$ is different than $q'$

The single $\callconfig$ symbol on the middle of the sub-word 
ensures that we are checking violations in successive configurations. Here,
with the current state being $\state$, the head read $0$ on cell $i$, wrote $1$
successfully, and moved to the right.
But the state changed to $\state''$ instead of $\state'$. Since we assumed that
$\tm$ is deterministic, this is indeed a violation of the transition rules.

We now have all the ingredients in order to construct $\notdelta$. It will be 
built as a non-deterministic choice (or union) of $\notdeltaexample{t}$
for all possible transitions $t \in \transitions$ (with $\transitions$ 
seen as a relation).

As an example, we show how to construct the automaton $\notdeltaex$,
part of $\notdelta$, and recognizing violations of 
$\transitions(\state,0) = (\state',1,\righttr)$, where the head was indeed 
moved to right, but the state was changed to some state $\state''$ instead of 
$\state'$, like above. Other violations may be recognized similarly.

$\notdeltaex$ starts by finding a sub-word of the form (the $+$ denotes the
disjunction or union of regular expressions, and $^*$ denotes the Kleene star,
0 or more repetitions):
\begin{flalign}
\label{eq:firstpermutation}
&\encodeCell{
  (\mletter_10+\pletter_11)\dots
  (\mletter_\thelog 0+\pletter_\thelog 1)
}{\state0}
\end{flalign}
meaning the state is $\state$ and the head points to a cell containing $0$.
After that, it reads arbitrarily many symbols, but exactly one $\callconfig$
symbol, which ensures that the next letters it reads are from the next 
configuration. Finally, it looks for a sub-word of the form
\begin{flalign}
\label{eq:secondpermutation}
&\encodeCell{
  (\pletter_10+\mletter_11)\dots
  (\pletter_\thelog 0+\mletter_\thelog 1)
}{(0+1)}
\encodeCell{(0+1)^*}\state''
\end{flalign}
for some $\state'' \neq \state'$.

We can now show the following.
\begin{lemma}
\label{lem:insertionerror}
For a \wellformedseq{} $\word$, these two statements are equivalent:
\begin{enumerate}
\item there is a way to insert the letters $\Letters$ into $\word$ to be 
accepted by $\notdeltaex$ \label{stmt:insertion}
\item in the sequence of configurations encoded by $\word$, there is a 
configuration where the state was $\state$ and the head was pointing to a cell 
containing $0$, and in the next configuration, the head was moved to the right,
but the state was not changed to $\state'$ (computation error). 
\label{stmt:error}
\end{enumerate}
\end{lemma}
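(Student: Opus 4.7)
The plan is to prove both directions of the equivalence, with the whole argument hinging on a counting property of the $2\thelog$ letters of $\Letters$ across the two sub-patterns (\ref{eq:firstpermutation}) and (\ref{eq:secondpermutation}).

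First I would establish the following combinatorial fact, which is the linchpin of the proof: in any insertion of $\Letters$ into $\word$ that makes the resulting word accepted by $\notdeltaex$, both sub-patterns must encode the same cell index. The regular expression (\ref{eq:firstpermutation}) consumes, at each bit position $b$, exactly one letter from $\{\pletter_b,\mletter_b\}$, and the same holds for (\ref{eq:secondpermutation}); these $\thelog+\thelog=2\thelog$ consumed letters saturate $|\Letters|$, so no letter of $\Letters$ can be inserted outside the two sub-patterns, and each $\pletter_b$ (resp.\ $\mletter_b$) occurs in exactly one of them. The pairing rule is moreover complementary: in (\ref{eq:firstpermutation}), $\mletter_b$ precedes a $0$ and $\pletter_b$ a $1$, whereas in (\ref{eq:secondpermutation}) $\mletter_b$ precedes a $1$ and $\pletter_b$ a $0$. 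Whether $\pletter_b$ appears in the first or second sub-pattern therefore fixes the $b$-th bit identically on both sides, so the two sub-patterns spell out the same binary index $i$.

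For the direction (\ref{stmt:error}) $\Rightarrow$ (\ref{stmt:insertion}), I would take the cell $i$ witnessing the error. In configuration $C_k$ the head is on cell $i$ in state $\state$ reading $0$; in $C_{k+1}$ cell $i$ is headless and the head lies strictly to the right with some state $\state''\neq\state'$. I would then insert, at each of the $\thelog$ bit positions of $\theint{i}$ inside cell $i$ of $C_k$, the letter specified by (\ref{eq:firstpermutation}) ($\mletter_b$ if bit $b$ is $0$, else $\pletter_b$), and at the corresponding positions of $\theint{i}$ inside cell $i$ of $C_{k+1}$ the complementary letter specified by (\ref{eq:secondpermutation}). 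The single $\callconfig$ between the two matched sub-patterns is the one opening $C_{k+1}$, and the later cell of $C_{k+1}$ carrying $\state''$ supplies the final piece of (\ref{eq:secondpermutation}). For the converse (\ref{stmt:insertion}) $\Rightarrow$ (\ref{stmt:error}), given an accepting run of $\notdeltaex$, the combinatorial fact above forces the two sub-pattern matches to point at a common cell $i$, while the exactly-one-$\callconfig$ constraint forces them to lie in consecutive configurations $C_k$ and $C_{k+1}$. Reading off the matches: in $C_k$ cell $i$ carries the head in state $\state$ with content $0$; in $C_{k+1}$ cell $i$ is headless while a later cell carries a state $\state''\neq\state'$. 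Since $\word$ is a \wellformedseq{}, each configuration has a unique head, so in $C_{k+1}$ the head is strictly to the right of cell $i$ with state $\state''\neq\state'$, which is exactly (\ref{stmt:error}).

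The main obstacle I expect is the combinatorial fact, specifically ruling out that an adversary inserts some letters of $\Letters$ in the connecting middle portion of the word and uses the remaining ones to forge a pair of matching sub-patterns at unrelated cells. The tight saturation identity $2\thelog=|\Letters|$ together with the complementary pairing $\mletter_b/\pletter_b \leftrightarrow 0/1$ in (\ref{eq:firstpermutation}) flipped to $\mletter_b/\pletter_b \leftrightarrow 1/0$ in (\ref{eq:secondpermutation}) is precisely what closes this door; once that is in hand, both directions reduce to directly reading off the shape of a \wellformedseq.
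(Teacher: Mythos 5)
Your proposal is correct and follows essentially the same route as the paper's proof: the key step in both is that, since each letter of $\Letters$ is inserted exactly once and the two sub-patterns pair $\mletter_b/\pletter_b$ with $0/1$ in complementary ways, the two matched cell indices must coincide, while the single $\callconfig$ in between forces consecutive configurations. You merely spell out the saturation argument ($2\thelog$ distinct letters consumed, hence none left for the middle portion) a bit more explicitly than the paper does.
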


\begin{proof}

$(\Leftarrow)$.
We 
insert the letters $\Letters$ in front of the binary representation of the 
cell number where the violation occurs. The violation involves two 
configurations: in the first, we insert $\mletter$'s in front of $0$'s, and 
$\pletter$'s in front of $1$'s, and in the second, it's the other way around.

This way, we inserted all the letters of $\Letters$ (exactly) once into 
$\word$, and $\notdeltaex$ is now able to recognize the 
patterns~(\ref{eq:firstpermutation}) and (\ref{eq:secondpermutation})
described above.

$(\Rightarrow)$.
For the other direction, let $\word$ be a \wellformedseq{} such that there 
exists a way to insert the letters $\Letters$ into $\word$, in order to obtain 
a word $\word_\Letters$ accepted by $\notdeltaex$.

Since each letter of $\Letters$ can be inserted only once, the sub-word 
matched by
$(\mletter_10+\pletter_11)\dots(\mletter_\thelog 0+\pletter_\thelog 1)$
in pattern~(\ref{eq:firstpermutation}) in $\notdeltaex$ has to be the
same as the one matched by 
$(\pletter_10+\mletter_11)\dots(\pletter_\thelog 0+\mletter_\thelog 1)$ 
in pattern~(\ref{eq:secondpermutation}), up to 
exchanging $\mletter$'s and $\pletter'$'s.

Moreover, having exactly one $\callconfig$ symbol in between the two patterns
ensures that they correspond to the same cell, but in two successive 
configurations.

Finally, the facts that $\state''$ is different that $\state'$ and that 
$\tm$ is deterministic ensures that the sequence of configurations represented 
by $\word$ indeed contains a computation error according to the rule 
$\transitions(\state,0) = (\state',1,\righttr)$.
\end{proof}

We thus get the following lemma for the automaton $\notdelta$.

\begin{lemma}
\label{lem:notdelta}
For a word $\word$ which is \wellformedseq, these statements are equivalent:
\begin{itemize}
\item 
  we can insert the letters 
  $\Letters$ in $\word$ to obtain a word accepted by $\notdelta$,
\item 
  $\word$ has a computation error according to the transition rules 
  $\transitions$ of $\tm$.
\end{itemize}
\end{lemma}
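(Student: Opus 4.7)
The plan is to define $\notdelta$ as the non-deterministic union
\[
  \notdelta \;=\; \bigcup_{t \in \transitions} \notdeltaexample{t},
\]
where each $\notdeltaexample{t}$ is a sub-automaton analogous to $\notdeltaex$ but tailored to the specific transition $t$. Concretely, for $t$ of the form $\transitions(\state,a) = (\state',a',d)$ with $d \in \set{\lefttr,\righttr}$, the automaton $\notdeltaexample{t}$ searches for a sub-word of shape~(\ref{eq:firstpermutation}) (with $\state$ and $a$ in place of $\state$ and $0$), reads any number of symbols with exactly one $\callconfig$, and then looks for a sub-word of shape~(\ref{eq:secondpermutation}) corresponding to the same cell index, but witnessing a violation: either the cell content was not rewritten to $a'$, or the head moved in the wrong direction, or after the move the state is some $\state'' \neq \state'$. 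The case of $\lefttr$ is symmetric: the adjacent cell $\theint{i-1}$ is examined before $\theint{i}$ in the second configuration, so the patterns (\ref{eq:firstpermutation}) and (\ref{eq:secondpermutation}) are swapped in their relative position around the $\callconfig$ symbol.

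Given this definition, the equivalence of the lemma follows from assembling the already-established per-transition equivalence. First I would observe that, since $\notdelta$ is a union, there is an insertion of $\Letters$ into $\word$ accepted by $\notdelta$ if and only if there exists some $t \in \transitions$ and some insertion accepted by $\notdeltaexample{t}$. By \lem{lem:insertionerror} (for the specific transition $\transitions(\state,0) = (\state',1,\righttr)$) and the entirely analogous lemmas that one would state for the remaining shapes of $t$, an insertion is accepted by $\notdeltaexample{t}$ precisely when $\word$ encodes a pair of successive configurations that violates the rule $t$. Taking the disjunction over $t$ yields: some insertion is accepted by $\notdelta$ iff $\word$ contains a pair of successive configurations that violates at least one transition rule, i.e.\ $\word$ has a computation error with respect to $\transitions$. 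Since $\word$ is assumed to be a \wellformedseq{}, every pair of consecutive configurations in $\word$ is well-formed, so computation errors are meaningfully defined.

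The main obstacle is not conceptual but notational: one must verify that the argument of \lem{lem:insertionerror} carries over verbatim to all transition shapes, in particular to left-moving transitions and to transitions that rewrite the same bit. Each of these cases requires redrawing patterns (\ref{eq:firstpermutation}) and (\ref{eq:secondpermutation}) with the appropriate constants and with the appropriate ordering of adjacent cells, but the combinatorial core — that the single $\callconfig$ between the two patterns pins them to successive configurations, and that each letter of $\Letters$ can be used only once and therefore forces the two $\thelog$-bit addresses to coincide — is unchanged. I would therefore carry out one representative alternative case (say $\transitions(\state,1) = (\state',0,\lefttr)$) in detail and then appeal to symmetry for the rest.
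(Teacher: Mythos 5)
Your overall strategy matches the paper's: build $\notdelta$ as a union of per-transition automata $\notdeltaexample{t}$ and reduce the lemma to (analogues of) \lem{lem:insertionerror}. However, there is a genuine gap: you only detect violations that are local to the head. Every case you enumerate --- wrong rewritten symbol, wrong head direction, wrong successor state --- concerns the cell under the head or a cell adjacent to it. But a \wellformedseq{} can also have a computation error in which some cell \emph{far from the head} changes its content between two successive configurations. Such a word violates $\transitions$ (the successor configuration is not the one prescribed by the transition function), yet none of your $\notdeltaexample{t}$ would accept any insertion of $\Letters$ into it, since all your patterns anchor on the head position. This breaks the direction ``$\word$ has a computation error $\Rightarrow$ some insertion is accepted by $\notdelta$'', and without it the final reduction (\lem{lem:combination}) would wrongly certify ``runs'' in which the tape mutates spontaneously. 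The paper closes this by adding to the union a further automaton, built with the same $\Letters$-marking trick, that recognizes the violation where a cell changes while the head is not on it; you need this frame-condition automaton as an explicit extra disjunct.

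A smaller point: your description of the left-moving case is off. In the word encoding, the first configuration's cells always precede the second configuration's cells, so the two patterns are never swapped around the $\callconfig$ symbol. What changes for a $\lefttr$ move is only the internal layout of the second pattern: the adjacent cell carrying the new state appears \emph{before} the address-matched cell rather than after. As stated, ``the patterns are swapped in their relative position around the $\callconfig$ symbol'' would place part of the second configuration before the separator, which cannot occur in a \wellformedseq{}.
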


\begin{proof}
Construct all the $\notdeltaexample{t}$ for $t \in \transitions$ 
(with $\transitions$ considered as a relation). Construct similarly an
automaton recognizing the violation where a cell changes while the head
was not here. Take the union of all these automata, the 
proof then follows from \lem{lem:insertionerror}.
\end{proof}

By taking the union $\aut = \notrun \cup \notdelta$, we finally get the 
intended result, which ends the reduction.

\begin{lemma}
\label{lem:combination}
The following two statements are equivalent.
\begin{itemize}
\item the run of $\tm$ starting in state $\istate$ with the tape initialized 
with $\tminput$ is accepting,
\item there exists a word $\word$ in $\largeAlphabet^*$, such that there is no 
way to insert the letters $\Letters$ in order to obtain a 
word accepted by $\aut$.
\end{itemize}
\end{lemma}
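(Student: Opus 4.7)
The plan is to chain the preceding sublemmas, using as the witness word $\word$ the encoding of the computation of $\tm$ on input $\tminput$. I will rely on the convention that $\notrun$ from \lem{lem:notwellformed2} is constructed over the full alphabet $\largeAlphabet \uplus \Letters$ but is insensitive to the letters of $\Letters$—for instance by equipping every state with a self-loop on each $\pletter_i$ and $\mletter_i$—so that $\notrun$ accepts any insertion of $\word$ if and only if the underlying $\word \in \largeAlphabet^*$ already violates one of the four conditions of \lem{lem:notwellformed2}. This separation is what makes the two halves of $\aut = \notrun \cup \notdelta$ act on disjoint aspects of the situation (global run structure on one side, local transition violations detected via insertions on the other).

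For the forward direction, assume $\tm$ accepts $\tminput$, and let $\word$ be the encoding of its unique accepting run as $\callrun \cfg_1 \cdots \cfg_k \retrun$. To show no insertion of $\Letters$ lies in $\aut$, I argue separately for the two components. Acceptance by $\notrun$ is excluded because, by the insensitivity convention, it would force $\word$ itself to fail one of the four well-formedness or initial/final conditions, which it does not. Acceptance by $\notdelta$ is excluded by \lem{lem:notdelta}: the hypothesis of well-formedness holds, and the conclusion would force a computation error in $\word$, contradicting that $\word$ is the actual deterministic run of $\tm$.

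For the backward direction, let $\word \in \largeAlphabet^*$ be such that no insertion is accepted by $\aut$. Because no insertion is in $\notrun$, the insensitivity of $\notrun$ to $\Letters$ forces $\word$ to pass all four tests of \lem{lem:notwellformed2}: it is a \wellformedseq, its first configuration is in state $\istate$ with tape $\tminput$, and its last configuration is in state $\fstate$. Since $\word$ is well-formed, \lem{lem:notdelta} applies, and the assumption that no insertion reaches $\notdelta$ yields that $\word$ contains no transition violation. Determinism of $\tm$ then identifies $\word$ with the actual computation of $\tm$ on $\tminput$, and this computation reaches $\fstate$, so it is accepting.

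The main obstacle is controlling all possible placements of the $\Letters$ letters uniformly: a careless construction could let a clever insertion satisfy one of the two automata spuriously. The insensitivity of $\notrun$ to $\Letters$ neutralizes this for the run-structure part (the verdict depends only on the $\largeAlphabet$-projection, which is $\word$ itself), and \lem{lem:notdelta} already quantifies over all insertions for the transition-rule part. Together they make the case analysis collapse into the two independent arguments above, with no further reasoning about where the $\Letters$ letters are placed.
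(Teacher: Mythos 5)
Your proof is correct and follows essentially the same route as the paper's: both directions chain \lem{lem:notwellformed2} and \lem{lem:notdelta}, with the encoding of the accepting run serving as the witness word. The only difference is that you make explicit the convention that $\notrun$ must be insensitive to the letters of $\Letters$ (so that acceptance of an insertion by $\notrun$ depends only on the underlying word $\word$); the paper leaves this implicit when it passes from ``no insertion is accepted by $\notrun$'' to ``$\word$ is not accepted by $\notrun$'', so your remark is a useful clarification rather than a deviation.
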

\begin{proof}
$(\Rightarrow)$ Let $\word$ be the \wellformedseq{} representing the sequence
of configurations of the accepting run in $\tm$, with the tape initialized
with $\tminput$. Then by \lem{lem:notwellformed2} and \lem{lem:notdelta},
there is no way to insert the letters $\Letters$ in order to obtain a word
accepted by $\notrun$ or $\notdelta$.

$(\Leftarrow)$ Let $\word \in \largeAlphabet^*$ be a word such that there is no
way to insert the letters $\Letters$ in order to obtain a word accepted by 
$\aut$. First, since $\word$ is not accepted by $\notrun$, it represents
a \wellformedseq{}, starting in state $\istate$ with the tape initialized
with $\tminput$ and ending in state $\fstate$ (\lem{lem:notwellformed2}).
Moreover, since there is no way to insert the letters to obtain a word
from $\notdelta$, $\word$ has no computation error according to the transition
rules $\transitions$ of $\tm$ (\lem{lem:notdelta}).
\end{proof}

This ends the proof of \lem{lem:problemx}.
\end{proof}

Since \problemx{} is $\EXPSPACE$-hard and, \problemx{} reduces to 
\Linearizability{}, we get the main result of the paper.
 
\begin{theorem}[\Linearizability]
Linearizability{} is $\EXPSPACE$-complete.
\end{theorem}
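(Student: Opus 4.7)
The plan is to assemble the theorem directly from results already established, so that the theorem itself is essentially a packaging step. Two ingredients are required: the EXPSPACE upper bound, which comes from prior work, and the matching lower bound, which chains together the two main lemmas of this paper.

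For the upper bound, I would invoke the result of \citet{journals/iandc/AlurMP00} recalled in the introduction, which places \Linearizability{} in $\EXPSPACE$ for a finite library, a fixed number of threads, and a regular specification. The formalism of \sect{sec:definitions} (NFA specification, library as a finite set of methods, unary thread count) matches exactly the setting covered by their algorithm, so no adaptation or re-proof is needed.

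For the lower bound, I would compose \lem{lem:reduction} and \lem{lem:problemx}. The former supplies a polynomial-time reduction from \problemx{} to \Linearizability{}, and the latter shows that \problemx{} is $\EXPSPACE$-hard. Since $\EXPSPACE$ is closed under polynomial-time many-one reductions, the composition gives $\EXPSPACE$-hardness of \Linearizability{}. Combined with the upper bound, this yields $\EXPSPACE$-completeness.

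There is no genuine obstacle at the level of the theorem itself: the two lemmas have been arranged precisely so that the conclusion follows in one step. All the technical content is concentrated in \lem{lem:problemx}, where the encoding of exponentially bounded \tmachine{} computations as \wellformedseq{s} and the use of the insertable letters $\Letters$ to pinpoint transition violations do the real work; and in \lem{lem:reduction}, where the three-valued shared variable $\set{\stopval,\goval,\lastval}$ together with the $\stopmeth$ method forces all non-linearizable executions of the constructed library into exactly the shape needed to simulate insertions of $\Letters$.
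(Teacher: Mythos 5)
Your proposal is correct and follows exactly the paper's own proof: the $\EXPSPACE$ upper bound is cited from \citet{journals/iandc/AlurMP00}, and hardness is obtained by composing Lemma~\ref{lem:reduction} with Lemma~\ref{lem:problemx}. (The only quibble is that the justification for the hardness step is the composability of polynomial-time reductions, rather than closure of $\EXPSPACE$ under such reductions, but this does not affect correctness.)
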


\begin{proof}
It was previously shown that \Linearizability{} is in $\EXPSPACE$~\citep{journals/iandc/AlurMP00}.
$\EXPSPACE$-hardness follows from 
Lemmas~\ref{lem:reduction} and \ref{lem:problemx}
\end{proof}
\section{Conclusion}

We define a new problem, \problemx{}, simpler than \Linearizability{}, but 
still hard enough to capture the main difficulties of \Linearizability{}. 
We showed that the \problemx{} problem is $\EXPSPACE$-hard, and could thus 
deduce that the \Linearizability{} problem is $\EXPSPACE$-hard. 

Our 
result applies even with all
the following restrictions:
the number of threads is given in unary,
there is a unique shared variable whose domain size is $3$,
  the library has a constant number of automata ``shapes'' 
  (3 in our reduction) using less than $3$ states,
the methods of the library are deterministic,
the methods of the library have no loop,
and the instructions within the methods can only read or write, but never do 
both atomically.

For future work, we plan to show that restricting ourselves to deterministic 
specifications (using a DFA instead of an NFA in the input of the problem) does 
not reduce the complexity.
Furthermore, it would be interesting to find a large class of specifications 
including the most common ones 
(stack, queue, \dots) for which our lower-bound does not apply and where we
could reduce the complexity.

\bibliographystyle{splncsnat}
\bibliography{violin}

\end{document}